\newtheorem{theorem}{Theorem}[subsubsection]
\newtheorem{lemma}[theorem]{Lemma}
\theoremstyle{remark}
\numberwithin{equation}{section}
\numberwithin{theorem}{section}
\numberwithin{figure}{section}
\DeclareMathOperator{\supp}{supp}
\DeclareSymbolFont{bbold}{U}{bbold}{m}{n}
\DeclareSymbolFontAlphabet{\mathbbold}{bbold}
\begin{document}
\title{A Lower Bound on the Renormalized Nelson Model}
\author{Gonzalo A. Bley}
\affil{Institut for Matematik, Aarhus Universitet, \authorcr Ny Munkegade 118,  8000 Aarhus C, Denmark}
\maketitle
\begin{abstract}
We provide explicit lower bounds for the ground-state energy of the renormalized Nelson model in terms of the coupling constant $\alpha$ and the number of particles $N$, uniform in the meson mass and valid even in the massless case. In particular, for any number of particles $N$ and large enough $\alpha$ we provide a bound of the form $-C\alpha^2 N^3\log^2(\alpha N)$, where $C$ is an explicit positive numerical constant; and if $\alpha$ is sufficiently small, we give one of the form $-C\alpha^2 N^3\log^2 N$ for $N \geq 2$, and $-C\alpha^2$ for $N = 1$. Whereas it is known that the renormalized Hamiltonian of the Nelson model is bounded below (as realized by E. Nelson) and implicit lower bounds have been given elsewhere (as in a recent work by Gubinelli, Hiroshima, and L\"{o}rinczi), ours seem to be the first fully explicit lower bounds with a reasonable dependence on $\alpha$ and $N$. We emphasize that the logarithmic term in the bounds above is probably an artifact in our calculations, since one would expect that the ground-state energy should behave as $-C\alpha^2 N^3$ for large $N$ or $\alpha$, as in the polaron model of H. Fr\"{o}hlich.
\end{abstract}
\begin{section}{Introduction}
The Nelson Model describes the interaction of non-relativistic nucleons with a meson field. The model is ascribed to Edward Nelson, who presented it in some detail for the first time in a conference held in 1963 \cite{N1}. In a subsequent article that was published shortly afterward, Nelson \cite{N2} studied the model in a substantially more organized and systematic way: instead of using stochastic-analytic methods as in \cite{N1}, which led to rather unwieldy computations, he used operator-theoretic techniques, which made the arguments substantially more transparent. The Hamiltonian describing the interaction between nucleons and mesons is given by
\begin{equation}
H_{\alpha, \mu}^{N, \Lambda} \equiv -\sum_{n = 1}^N\frac{\Delta_n}{2} + \int_{\mathbb{R}^3}\chi_{\Lambda}(k)\omega(k)a_k^{\dagger}a_k\,dk + \sqrt{\alpha}\sum_{n = 1}^N\int_{\mathbb{R}^3}\frac{\chi_{\Lambda}(k)}{\sqrt{\omega(k)}}(e^{ikx_n}a_k + e^{-ikx_n}a_k^{\dagger})\,dk,
\label{equation.nelson.hamiltonian}
\end{equation}
acting on $L^2(\mathbb{R}^{3N})\otimes\mathcal{F}$, where $\mathcal{F}$ is the Fock space over $L^2(\mathbb{R}^3)$. The symbols appearing in \eqref{equation.nelson.hamiltonian} are defined as follows: $N$ is the total number of nucleons; if a vector $x$ in $\mathbb{R}^{3N}$ is written as $(x^1, x^2, \ldots , x^N)$, with $x^i \equiv (x^i_1, x^i_2, x^i_3)$, then $\Delta_n$ is the operator $\partial^2\! / \partial^2 x_1^n + \partial^2\! / \partial^2 x_2^n + \partial^2\! / \partial^2 x_3^n$; $\Lambda$ is a non-negative real number that plays the role of an ultraviolet cutoff; $\chi_{\Lambda}(k)$ is the indicator function of the set $\left\{ k : |k| \leq \Lambda\right\}$; $\alpha$ is a number greater than or equal to zero and measures the strength of the nucleon-field interaction; $\omega(k)$ is the function $\sqrt{k^2 + \mu^2}$, where $\mu$ is a non-negative number measuring the mass of the mesons; and $a_k$, $a_k^{\dagger}$ are the standard annihilation and creation operators for the Fock space $\mathcal{F}$, satisfying $[a_k, a_{k'}^{\dagger}] = \delta(k - k')$ and $[a_k, a_{k'}] = [a_k^{\dagger}, a_{k'}^{\dagger}] = 0$.

We shall now list a series of spectral properties of $H_{\alpha, \mu}^{N, \Lambda}$. We shall refer to $H_{\alpha, \mu}^{N, \Lambda}$ simply as $H$, whenever there is no risk of confusion. It is known that for finite $\Lambda$, $H$ is self-adjoint and bounded-below, but as $\Lambda$ goes to $\infty$ the infimum of the spectrum goes to $-\infty$ \cite{N2}. Despite this fact, a stabilizing term can be added to the Hamiltonian, which allows one to take $\Lambda$ to $\infty$ while still retaining in the limit a bounded-below operator. This was first discovered by Nelson in \cite{N1, N2}. The precise term is
\begin{equation}
Q_{\alpha, \mu}^{N, \Lambda} \equiv \alpha N \int\frac{\chi_{\Lambda}(k)\,dk}{\omega(k)\left[k^2/2 + \omega(k)\right]},
\label{equation.renormalizing.term}
\end{equation}
and the result is that for all real $t$, $\exp[it(H_{\alpha, \mu}^{N, \Lambda} + Q_{\alpha, \mu}^{N, \Lambda})] \to \exp(it\widehat{H}_{\alpha, \mu}^N)$ strongly as $\Lambda \to \infty$, for some self-adjoint, bounded-below operator $\widehat{H}$. $\widehat{H}$ is known as the renormalized Hamiltonian of the Nelson model. The term $Q$, defined in \eqref{equation.renormalizing.term}, appears naturally after performing a Gross transformation on $H$; we refer the reader to \cite{N2} for more information. The main purpose of this article is to give a numerical lower bound to the infimum of the spectrum of $\widehat{H}$ with an explicit dependence on $N$, $\alpha$, and $\mu$. To the best of our knowledge, at least one implicit lower bound has appeared before, in a work by Gubinelli, Hiroshima and L\"{o}rinczi \cite[Corollary 2.18]{GHL}. Ours seems to be the first explicit one.

Our main result consists of a family of lower bounds for the renormalized Nelson model Hamiltonian $\widehat{H}$. The general result we provide later below is however rather complicated, and it is perhaps better in this introduction to simply state particular bounds that may be derived from it. For example, we will show below that for sufficiently large $\alpha$ and all $\Lambda, \mu$, and $N$,
\begin{equation}
E_{\alpha, \mu}^{N, \Lambda} + Q_{\alpha, \mu}^{N, \Lambda} \geq -C\alpha^2 N^3\log^2\left(\alpha N\right),
\label{equation.main.result}
\end{equation}
where $E_{\alpha, \mu}^{N, \Lambda}$ is the ground-state energy of $H_{\alpha, \mu}^{N, \Lambda}$ and $C$ is an explicit positive number; and for small enough $\alpha$ we will find a lower bound of the form $E_{\alpha, \mu}^{N, \Lambda} + Q_{\alpha, \mu}^{N, \Lambda} \geq -C\alpha^2 N^3\log^2 N$ when $N \geq 2$, and $E_{\alpha, \mu}^{N, \Lambda} + Q_{\alpha, \mu}^{N, \Lambda} \geq -C\alpha^2$ when $N = 1$. The ground-state energy of the renormalized Hamiltonian is then obtained on the left side of the inequalities by taking $\Lambda \to \infty$ (see, for example, the discussion preceding \cite[Equation (1.2)]{GM}). We note that the constants $C$ are independent of all the model parameters ($\alpha$, $\mu$, $\Lambda$, and $N$), and that the results are valid for all meson masses, including the massless case $\mu = 0$; this, in spite that there is no ground state for the massless Hamiltonian in the absence of an infrared cutoff \cite{LMS}. We are inclined to believe that the logarithmic factors above are just artifacts; after all, an $N^3$ behavior of the ground-state energy for large $N$ is something one can find in other cases, such as the Fr\"{o}hlich polaron model \cite{Fr} (see \cite{BB1} and references therein, and also Section \ref{section.polaron} in this article), and when considering gravitating particles in stars \cite[Chapter 13]{LS}. We do not see any compelling reason to believe that the Nelson model could give rise to something new and different, such as an $N^3\log N$ behavior.

The idea behind the proof is as follows. The ground-state energy of $H$ (with a finite value of $\Lambda$) may be bounded from below by means of a Feynman-Kac-like formula
\begin{equation}
\text{inf spec }H \geq -\limsup_{T \to \infty}T^{-1}\log\left\{\sup_{x \in \mathbb{R}^{3N}}E^x\left[\exp\left(\mathcal{A}_T\right)\right]\right\},
\label{equation.ground.state.energy}
\end{equation}
for some Brownian functional $\mathcal{A}_T$ (an $L^2$ functional of $3N$-dimensional Brownian paths on $[0, T]$) that may be written explicitly in the case of the Nelson model after integrating the meson field variables; this will be explained later below. $E^x$ denotes expectation with respect to Brownian motion starting at the point $x \in \mathbb{R}^{3N}$. We then use an idea appearing in a recent article by the author and L.E. Thomas \cite{BT}, namely we apply the Clark-Ocone formula, well-known from Malliavin Calculus and Mathematical Finance \cite{Nu, KS2, KS1}, which allows one to rewrite this action $\mathcal{A}_T$ as
\begin{equation}
\mathcal{A}_T = E^x\left(\mathcal{A}_T\right) + \int_0^T\!\rho_t\,dX_t,
\label{equation.clark.ocone.formula}
\end{equation}
for some unique, adapted, $\mathbb{R}^{3N}$-valued, $L^2$ process $\rho$, where $X$ is $3N$-dimensional Brownian motion, and the integral is It\^{o}. We remark two things about equation \eqref{equation.clark.ocone.formula}: it is valid for any Brownian functional $\mathcal{A}_T$, and the point $x$ defines the Brownian paths $\mathcal{A}_T$ acts on, meaning that a change in $x$ changes $\mathcal{A}_T$ (and also in general $\rho$). A simple supermartingale estimate allows one then to obtain the bound
\begin{equation}
E^x\left[\exp\left(\mathcal{A}_T\right)\right] \leq \exp\left[E^x\left(\mathcal{A}_T\right)\right]E^x\left[\exp\left(\frac{p^2}{2(p - 1)}\int_0^T\!\rho_t^2\,dt\right)\right]^{1 - 1/p}
\label{equation.clark.ocone.estimate}
\end{equation}
for any $p > 1$ \cite{BT}. Except for some factors and an exponent, the right side in \eqref{equation.clark.ocone.estimate} is identical to the left, with a new action $\int_0^T\rho_t^2\,dt$, which in the case of the Nelson model will turn out to be more tractable than the original one, $\mathcal{A}_T$. We will then use the results in \cite{BT} to further bound this new action. For every $x$, the factor $E^x(\mathcal{A}_T)$ in \eqref{equation.clark.ocone.estimate} will happen to be, up to sublinear terms (growing more slowly than $T$ as $T \to \infty$), exactly $TQ$, where $Q$ is defined above, in equation \eqref{equation.renormalizing.term}, which will at the end allow us to obtain the desired bound. This is something to take note of, and should not be overlooked: the fact that $E^x(\mathcal{A}_T) = TQ + o(T)$ for every $x$ is remarkable, because in some sense it indicates that performing a Clark-Ocone expansion at the functional-integral level amounts to doing the Gross transformation that Nelson considered in the context of operator methods.

The structure of the rest of the article is as follows. In Section \ref{section.proof} a proof of the general lower bound for the renormalized Nelson model Hamiltonian (and in particular of the large- and small-$\alpha$ cases stated above) is provided. In Section \ref{section.polaron} similar calculations are performed for the multi-particle polaron model (which is described there), and a lower bound for its spectrum is obtained. We obtain an answer consistent with a simple upper bound, $-0.109\alpha^2N^3$ \cite{BB1}, which essentially follows from an argument of Pekar \cite{P}. At the end an appendix is provided, where the main Feynman-Kac-like formula \eqref{equation.ground.state.energy} is derived, and also a combinatorial argument is given, that allows for an improvement on the estimate on the ground-state energy of the Nelson model.

The present article is partly based on the Ph.D. thesis of the author \cite{BPT}, and we will refer to it from time to time. We would like to thank Oliver Matte, Jacob Schach M\o ller, and Lawrence Thomas for productive discussions. The author also acknowledges partial support from the Danish Council for Independent Research (Grant number DFF-4181-00221).
\end{section}
\begin{section}{Proof of the Lower Bound for the Nelson Model}
\label{section.proof}
We provide in this section the proof of the lower bound mentioned in the introduction for the Nelson model, with Hamiltonian \eqref{equation.nelson.hamiltonian}. After integrating the field variables, one encounters the following expression for $\mu > 0$:
\begin{align}
& \text{inf spec }H_{\alpha, \mu}^{N, \Lambda}\nonumber\\
\geq & -\limsup_{T \to \infty}T^{-1}\log\left\{\sup_{x \in \mathbb{R}^{3N}}E\left[\exp\left(\alpha\sum_{m, n}
\int\!\!\!\int_0^T\!\!\!\int_0^t\frac{\chi_{\Lambda}(k)e^{-\omega(k)(t - s)}}{\omega(k)}e^{-ik(X_t^m - X_s^n + x^m - x^n)}\,ds\,dt\,dk\right)\right]\right\},
\label{equation.functional.integral.nelson}
\end{align}
a derivation of which appears in the appendix. (The massless case will be addressed later in this section.) In \eqref{equation.functional.integral.nelson} we have passed the $x$-dependence to the functional $\mathcal{A}_T$, and for this reason $E$ denotes $E^0$ (a practice we shall continue to use in the rest of the article). We then apply the Clark-Ocone formula, equation \eqref{equation.clark.ocone.formula}. The requirement for that formula to be valid is that the action $\mathcal{A}_T$, defined here as the argument of the exponential in \eqref{equation.functional.integral.nelson},
\begin{gather}
\mathcal{A}_T \equiv \alpha\sum_{m, n}
\int\!\!\!\int_0^T\!\!\!\int_0^t\frac{\chi_{\Lambda}(k)e^{-\omega(k)(t - s)}}{\omega(k)}e^{-ik(X_t^m - X_s^n + x^m - x^n)}\,ds\,dt\,dk,
\label{equation.nelson.model.action}
\end{gather}
be real and in $L^2$. Real-valuedness is immediate, as the sine function is odd, and integration over all $k$'s makes the imaginary part of the $k$-integral vanish. Square integrability follows from a simple estimate on a single summand of $\mathcal{A}_T$,
\begin{gather}
\left|\int\!\!\!\int_0^T\!\!\!\int_0^t\frac{\chi_{\Lambda}(k)e^{-\omega(k)(t - s)}}{\omega(k)}e^{-ik(X_t^m - X_s^n + x^m - x^n)}\,ds\,dt\,dk\right| \leq \int\!\!\!\int_0^T\!\!\!\int_0^t\frac{\chi_{\Lambda}(k)e^{-\omega(k)(t - s)}}{\omega(k)}\,ds\,dt\,dk,
\end{gather}
which is independent of the Brownian path chosen. The first part of the proof will be computing $E(\mathcal{A}_T)$.
\begin{subsection}{Computation of $E(\mathcal{A}_T)$}
\label{subsection.expectation}
The calculation of the expectation of $\mathcal{A}_T$ comes about after using the independence of Brownian motions corresponding to different particles and the fact that $E(e^{ikZ}) = e^{-k^2\sigma^2/2}$ for a 3D Gaussian random variable $Z$ with zero mean and variance $\sigma^2$,
\begin{align}
E(\mathcal{A}_T) = \, & \alpha N T\int\frac{\chi_{\Lambda}(k)}{\omega(k)\left[k^2/2 + \omega(k)\right]}\,dk - \alpha N\int\frac{\chi_{\Lambda}(k)}{\omega(k)\left[k^2/2 + \omega(k)\right]^2}\left(1 - e^{-\left[k^2/2 + \omega(k)\right]T}\right)\,dk\nonumber\\
& + \alpha N (N - 1)\int_0^T\!\!\!\int_0^t\!\!\int\frac{e^{-k^2t/2}e^{-k^2s/2}}{\omega(k)}e^{-ik(x^m - x^n)}e^{-\omega(k)(t - s)}\chi_{\Lambda}(k)\,dk\,ds\,dt.
\end{align}
The first term divided by $T$ is, as has been said already in the introduction, the renormalizing term \eqref{equation.renormalizing.term}, which we have called $Q_{\alpha, \mu}^{N, \Lambda}$. The second term is certainly sublinear, and so may be neglected. The third term also turns out to be sublinear. This can be proven in a rather direct way: By using the simple estimates $\omega(k) \geq |k|$ and $\chi_{\Lambda} \leq 1$, we can bound the third term from above as
\begin{align}
& \alpha N(N - 1)\int_0^T\!\!\!\int_0^t\!\!\int\frac{e^{-k^2t/2}e^{-k^2s/2}}{|k|}e^{-|k|(t - s)}\,dk\,ds\,dt\nonumber\\
= & \, 4\pi\alpha N(N - 1)\int_0^T\!\!\!\int_0^t\!\!\int_0^{\infty} e^{-r^2 t/2}e^{-r^2 s/2}e^{-r(t - s)}r \,dr\,ds\,dt \equiv 4\pi\alpha N(N - 1)I.
\end{align}
After computing the $s$-integral we find
\begin{gather}
I = \int_0^{\infty}\!\!\!\int_0^T\frac{e^{-r^2 t} - e^{-r(1 + r/2)t}}{1 - r/2}\,dt\,dr,
\end{gather}
and then pick a number $0 < \varepsilon < 2$ and split $I$ as
\begin{align}
& \,\int_0^{\varepsilon}\int_0^T \frac{e^{-r^2 t}\left(1 - e^{-r(1 - r/2)t}\right)}{1 - r/2}\,dt\,dr + \int_{\varepsilon}^{2}\int_0^T \frac{e^{-r^2 t}\left(1 - e^{-r(1 - r/2)t}\right)}{1 - r/2}\,dt\,dr\nonumber\\
& \, + \int_{2}^{\infty}\int_0^T \frac{e^{-r\left(1 + r/2\right)t}\left(1 - e^{-r(r/2 - 1)t}\right)}{r/2 - 1}\,dt\,dr\nonumber\\
\leq & \, \int_0^{\varepsilon}\int_0^T(1 - r/2)^{-1}\,dt\,dr + \int_{\varepsilon}^2\int_0^T e^{-r^2 t} rt\,dt\,dr + \int_2^{\infty}\int_0^T e^{-r(1 + r/2)t}rt\,dt\,dr\nonumber\\
\leq & \, -2T\log(1 - \varepsilon/2) + 3\varepsilon^{-2}.
\end{align}
By picking then $1 - \varepsilon/2 = e^{-T^{-1/4}}$, we obtain then that $I \leq CT^{3/4}$ for large enough $T$ and some constant $C$.

The upshot is that $E(\mathcal{A}_T) = Q_{\alpha, \mu}^{N, \Lambda}T + o(T)$. We now proceed to compute and estimate the second term in the Clark-Ocone expansion, namely $\int_0^T\!\rho_t\,dX_t$.
\end{subsection}
\begin{subsection}{Computation of $\int_0^T\!\rho_t\,dX_t$}
In equation \eqref{equation.clark.ocone.formula} we stated what the Clark-Ocone formula says in general, but no procedure was given to compute $\rho$. In the case of interest here, the Nelson model action, equation \eqref{equation.nelson.model.action}, the process $\rho$ can be calculated explicitly as $E\left(D_t \mathcal{A}_T | \mathcal{F}_t\right)$, where $\mathcal{F}_t$ is the standard filtration for $3N$-dimensional Brownian motion $X_t$, namely $\sigma\left(X_s : 0 \leq s \leq t\right)$, and $D_t$ is the Malliavin derivative of $\mathcal{A}_T$. An introduction to the Malliavin derivative here would take us too far afield -- the reader is simply referred to the publication of the author and Thomas \cite{BT} already mentioned in the introduction, where a fairly complete introduction to this operator is introduced, discussed, and applied in certain cases of interest. Enough for our purposes here will be to provide the following prescription for computing this derivative for a certain class of Brownian functionals: Let $f : \mathbb{R}^m \to \mathbb{R}$ be a smooth function with polynomial growth, and $g_1, g_2, \ldots , g_m$ a collection of functions $[0, T] \to \mathbb{R}^n$ in $L^2$. Then, if $X$ represents $n$-dimensional Brownian motion and $W(h)$ is the It\^{o} integral of an $L^2$ function $h$ from $[0, T]$ to $\mathbb{R}^n$, namely $\int_0^T h_t\,dX_t$,
\begin{gather}
D_t f\left[W(g_1), W(g_2), \ldots, W(g_m)\right] = \sum_{i = 1}^m\frac{\partial f}{\partial x_i}\left[W(g_1), W(g_2), \ldots, W(g_m)\right]g_i(t).
\label{equation.malliavin.prescription}
\end{gather}
We shall only need \eqref{equation.malliavin.prescription} for $m = 1$, and therefore we will record such result here for future reference:
\begin{gather}
D_t f\left[W(g)\right] = f'\!\left[W(g)\right] g(t).
\label{equation.malliavin.prescription.single}
\end{gather}

Even though \eqref{equation.malliavin.prescription.single} does not apply immediately to the current case, action \eqref{equation.nelson.model.action}, it can easily be adapted to it, in particular by approximating the integrals by Riemann sums. (See \cite{BT} for more information.) By noting that each one of the summands of $\mathcal{A}_T$, defined in \eqref{equation.nelson.model.action}, is actually real (the imaginary part of the $k$-integral vanishing), one arrives at the end at the formula
\begin{align}
D_u\mathcal{A}_T = \, & D_u\left(\alpha\sum_{m, n}\int\!\!\!\int_0^T\!\!\!\int_0^t\frac{\chi_{\Lambda}(k)e^{-\omega(k)(t - s)}}{\omega(k)}\cos\left[k(X_t^m - X_s^n + x^m - x^n)\right]\,ds\,dt\,dk\right)\nonumber\\
= \, & \alpha\sum_{m, n}\int\!\!\!\int_0^T\!\!\!\int_0^t\frac{\chi_{\Lambda}(k)e^{-\omega(k)(t - s)}}{\omega(k)}D_u\cos\left[k(X_t^m - X_s^n + x^m - x^n)\right]\,ds\,dt\,dk.
\end{align}
Note how $\cos\left[k(X_t^m - X_s^n + x^m - x^n)\right]$ is of the form $f\left[W(g)\right]$, as
\begin{gather}
\cos\left[k(X_t^m - X_s^n + x^m - x^n)\right] = \cos\left\{\int_0^T\left[k^m 1_{[0, t]}(r) - k^n 1_{[0, s]}(r)\right]\, dX_r + k(x^m - x^n)\right\},
\end{gather}
where $k^l$ is the embedding of the vector $k$ into the null vector with $3N$ coordinates, with its components written in the $3l - 2$, $3l - 1$, and $3l$ positions, that is
\begin{gather}
k^l \equiv (0, \ldots, 0, \underbrace{k_1}_{3l - 2}, \underbrace{k_2}_{3l - 1}, \underbrace{k_3}_{3l}, 0, \ldots, 0).
\end{gather}
Therefore, by means of the formula \eqref{equation.malliavin.prescription.single},
\begin{align}
& \int\!\!\!\int_0^T\!\!\!\int_0^t\frac{\chi_{\Lambda}(k)e^{-\omega(k)(t - s)}}{\omega(k)}D_u\cos\left[k(X_t^m - X_s^n + x^m - x^n)\right]\,ds\,dt\,dk\nonumber\\
= \, & -\int\!\!\!\int_0^T\!\!\!\int_0^t\frac{\chi_{\Lambda}(k)e^{-\omega(k)(t - s)}}{\omega(k)}\sin\left[k(X_t^m - X_s^n + x^m - x^n)\right]\left[k^m 1_{[0, t]}(u) - k^n 1_{[0, s]}(u)\right]\,ds\,dt\,dk\nonumber\\
= \, & -i\int\!\!\!\int_0^T\!\!\!\int_0^t\frac{\chi_{\Lambda}(k)e^{-\omega(k)(t - s)}}{\omega(k)}e^{-ik(X_t^m - X_s^n + x^m - x^n)}\left[k^m 1_{[0, t]}(u) - k^n 1_{[0, s]}(u)\right]\,ds\,dt\,dk,
\end{align}
and this is how we finally arrive at the formula
\begin{gather}
D_u\mathcal{A}_T = -i\alpha\sum_{m, n}\int\!\!\!\int_0^T\!\!\!\int_0^t\frac{\chi_{\Lambda}(k)e^{-\omega(k)(t - s)}}{\omega(k)}e^{-ik(X_t^m - X_s^n + x^m - x^n)}\left[k^m 1_{[0, t]}(u) - k^n 1_{[0, s]}(u)\right]\,ds\,dt\,dk.
\end{gather}

The conditional expectation $E\left(D_u\mathcal{A}_T|\mathcal{F}_u\right)$ may now be computed directly, by writing $e^{ikY_r}$ as the product $e^{ik(Y_r - Y_u)}e^{ikY_u}$ for a 3D Brownian motion $Y$ and $r \geq u$, using the independence of $X_n$ and $X_m$ for different $n$ and $m$, and the Markov property of Brownian motion, obtaining
\begin{align}
-\alpha\sum_{m, n}(2 - \delta_{nm})\int\!\!\!\int_u^T\!\!\!\int_0^t & \left[1 + \Theta(s - u)\right]\chi_{\Lambda}(k)e^{-\omega(k)(t - s)}\omega(k)^{-1}\nonumber\\
& \,\,\,\times e^{-k^2(t - u)/2}e^{-k^2(s - u)_{+}/2}\sin[k(X_u^m - X_{s\wedge u}^n + x^m - x^n)]k^m\,ds\,dt\,dk,
\end{align}
where $\Theta$ is Heaviside's theta function and $\delta_{nm}$ is Kronecker's delta. Furthermore, the angular integration in the variable $k$ may be performed explicitly, obtaining that $E(D_u\mathcal{A}_T|\mathcal{F}_u)$ is equal to
\begin{align}
-4\pi\alpha\sum_{m, n}(2 - \delta_{nm})\int_u^T\!\!\!\int_0^t & \frac{(X_u^m - X_{s\wedge u}^n + x^m - x^n)^m}{|X_u^m - X_{s\wedge u}^n + x^m - x^n|}\left[1 + \Theta(s - u)\right]\nonumber\\
&\times\int_{0}^{\Lambda}e^{-\nu(r)(t - s)}\nu(r)^{-1}e^{-r^2(t - u)/2}e^{-r^2(s - u)_+/2}\nonumber\\
&\qquad\quad\times\varphi(r|X_u^m - X_{s\wedge u}^n + x^m - x^n|)r^3\,dr\,ds\,dt,
\end{align}
where $\nu(r) \equiv \sqrt{r^2 + \mu^2}$ and $\varphi(x) = \left(\sin x - x\cos x\right)/x^2$. From here we find that $E(D_u\mathcal{A}_T|\mathcal{F}_u)^2$ can be bounded from above as
\begin{align}
256\pi^2\alpha^2\sum_{m = 1}^N\left(\sum_{n = 1}^N\int_u^T\!\!\!\int_0^t\!\!\int_0^{\Lambda}e^{-r(t - s)}e^{-r^2(t - u)/2}e^{-r^2(s - u)_{+}/2}|\varphi(r|X_u^m - X_{s \wedge u}^n + x^m - x^n|)|r^2\,dr\,ds\,dt\right)^2,
\end{align}
which we define as
\begin{gather}
256\pi^2\alpha^2\sum_{m = 1}^N\left(\sum_{n = 1}^N\mathcal{C}_{m, n}\right)^2.
\end{gather}
We will now estimate each one of the terms $\mathcal{C}_{m, n}$ from above. We find that, by splitting the $s$-integral into the fragments corresponding to $[0, u]$ and $[u, t]$,
\begin{align}
\mathcal{C}_{m, n} \leq \, & \int_{0}^{\Lambda}\!\!\!\int_0^u\frac{1 - e^{-(r + r^2/2)(T - u)}}{1 + r/2}e^{-r(u - s)}|\varphi(r|X_u^m - X_s^n + x^m - x^n|)|r\,ds\,dr\nonumber\\
& \, + \int_0^{\Lambda}\!\!\!\int_u^T\!\!\int_u^t e^{-r(t - s)}e^{-r^2(t - u)/2}e^{-r^2(s - u)/2}|\varphi(r|X_u^m - X_u^n + x^m - x^n|)|r^2\,ds\,dt\,dr\nonumber\\
\equiv \, & \mathcal{D}_{m, n} + \mathcal{E}_{m, n}.
\end{align}

We shall now estimate $\mathcal{D}$ and $\mathcal{E}$ in two separate lemmas.
\begin{lemma}
For all $\varepsilon > 0$, $0 \leq \phi < 1$, and $1 < \theta < 2$,
\begin{gather}
\mathcal{D}_{m, n} \leq \frac{2^{\phi}\|\varphi\|_{\infty}\Gamma(2 - \phi)}{(1 - \phi)\varepsilon^{1 - \phi}} + 2^{-1/2}\|\varphi(x)x^{\theta/2}\|_{\infty}\int_0^{\infty}\frac{r^{-(\theta - 1)/2}}{1 + r/2}\,dr\left(\int_0^u\frac{1_{[0, \varepsilon]}(u - s)}{|X_u^m - X_s^n + x^m - x^n|^{\theta}}\,ds\right)^{1/2}.
\end{gather}
\end{lemma}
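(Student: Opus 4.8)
The plan is to discard the harmless factor $1 - e^{-(r + r^2/2)(T - u)} \leq 1$ at the outset, and then split the $s$-integration in $\mathcal{D}_{m,n}$ according to whether $u - s$ is small or large. Concretely, I would insert the partition $1 = 1_{(\varepsilon, \infty)}(u - s) + 1_{[0, \varepsilon]}(u - s)$ under the integral sign, writing $\mathcal{D}_{m,n} \leq \mathcal{D}^{\mathrm{far}} + \mathcal{D}^{\mathrm{near}}$; the two pieces are then handled by completely different estimates, the first producing the first summand of the claimed bound and the second the other. Throughout, $\xi$ abbreviates $X_u^m - X_s^n + x^m - x^n$.

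For the far piece, where $u - s > \varepsilon$, the singularity of $\varphi$ is irrelevant, so I would simply bound $|\varphi(r|\xi|)| \leq \|\varphi\|_\infty$. The elementary inequality to use here is the interpolation $(1 + r/2)^{-1} \leq \min(1, 2/r) \leq 2^{\phi} r^{-\phi}$, valid for every $0 \leq \phi \leq 1$. The remaining integral is then at most $2^{\phi}\|\varphi\|_\infty \int_0^{\Lambda}\!\int_0^u 1_{(\varepsilon,\infty)}(u - s)\, e^{-r(u - s)} r^{1 - \phi}\, ds\, dr$, which only increases upon replacing $\Lambda$ and $u$ by $+\infty$; doing the $r$-integral first via $\int_0^{\infty} r^{1 - \phi} e^{-r\tau}\, dr = \Gamma(2 - \phi)\,\tau^{\phi - 2}$ and then integrating $\tau = u - s$ over $(\varepsilon, \infty)$ --- which converges precisely because $\phi < 1$ --- yields exactly $2^{\phi}\|\varphi\|_\infty \Gamma(2 - \phi)\big/\big[(1 - \phi)\varepsilon^{1 - \phi}\big]$.

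For the near piece, where $u - s \leq \varepsilon$, I would instead extract the singularity by writing $|\varphi(r|\xi|)| \leq \|\varphi(x) x^{\theta/2}\|_\infty (r|\xi|)^{-\theta/2}$; the norm $\|\varphi(x)x^{\theta/2}\|_\infty$ is finite for $1 < \theta < 2$, since $\varphi(x) = (\sin x - x\cos x)/x^2 \sim x/3$ as $x \to 0$ and $\sim -\cos x/x$ as $x \to \infty$. This reduces the near piece to a multiple of $\int_0^{\Lambda}(1 + r/2)^{-1} r^{1 - \theta/2}\!\int_0^u 1_{[0,\varepsilon]}(u - s)\, e^{-r(u - s)} |\xi|^{-\theta/2}\, ds\, dr$. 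Now I would apply Cauchy--Schwarz in $s$, splitting the inner integrand as $e^{-r(u - s)} \cdot |\xi|^{-\theta/2}$: the first factor contributes $\big(\int_0^u 1_{[0,\varepsilon]}(u - s) e^{-2r(u - s)}\, ds\big)^{1/2} \leq (2r)^{-1/2}$, and the second contributes $\big(\int_0^u 1_{[0,\varepsilon]}(u - s)|\xi|^{-\theta}\, ds\big)^{1/2}$, which does not depend on $r$ and factors out. What remains is $2^{-1/2}\int_0^{\Lambda}(1 + r/2)^{-1} r^{-(\theta - 1)/2}\, dr \leq 2^{-1/2}\int_0^{\infty}(1 + r/2)^{-1} r^{-(\theta - 1)/2}\, dr$, finite since $\theta > 1$ (the exponent $(1 - \theta)/2$ exceeds $-1$ near the origin, and the integrand decays like $r^{-(\theta + 1)/2}$ at infinity), and this is precisely the second summand.

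The computation is essentially routine once the near/far decomposition is fixed; the only genuinely delicate point is the bookkeeping of powers of $r$, namely arranging the Cauchy--Schwarz split so that the gain $r^{-1/2}$ from the exponential $s$-integral, together with the $r^{-\theta/2}$ pulled out of $\varphi$ and the $r$ from the measure, combines into a power $r^{-(\theta - 1)/2}$ that is integrable against $(1 + r/2)^{-1}$ on $(0, \infty)$. This, together with the finiteness of $\|\varphi(x)x^{\theta/2}\|_\infty$, is what pins down the admissible range $1 < \theta < 2$, while $0 \leq \phi < 1$ is forced by the convergence of the far-piece integral. I do not expect any step to present a real obstacle.
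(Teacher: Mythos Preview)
Your proposal is correct and follows essentially the same approach as the paper's own proof: the same near/far decomposition of the $s$-integral at the scale $\varepsilon$, the same interpolation bound $(1+r/2)^{-1}\leq 2^{\phi}r^{-\phi}$ together with $|\varphi|\leq\|\varphi\|_{\infty}$ on the far piece, and the same estimate $|\varphi(y)|\leq\|\varphi(x)x^{\theta/2}\|_{\infty}|y|^{-\theta/2}$ followed by Cauchy--Schwarz in $s$ on the near piece. The only cosmetic difference is that the paper writes the split via the explicit limits $[0,(u-\varepsilon)_+]$ and $[(u-\varepsilon)_+,u]$ rather than your indicator functions.
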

\begin{proof}
Let $\varepsilon > 0$. We have that
\begin{align}
\mathcal{D}_{m, n} \leq \, & \int_0^u\!\!\!\int_0^{\infty}\frac{r e ^{-r(u - s)}}{1 + r/2}|\varphi(r|X_u^m - X_s^n + x^m - x^n|)|\,dr\,ds\nonumber\\
= \, & \int_0^{(u - \varepsilon)_+}\!\!\!\int_0^{\infty}\frac{r e ^{-r(u - s)}}{1 + r/2}|\varphi(r|X_u^m - X_s^n + x^m - x^n|)|\,dr\,ds\nonumber\\
& + \int_{(u - \varepsilon)_+}^u\int_0^{\infty}\frac{r e ^{-r(u - s)}}{1 + r/2}|\varphi(r|X_u^m - X_s^n + x^m - x^n|)|\,dr\,ds.
\label{equation.estimate.D}
\end{align}
We will begin by concentrating on the first term in \eqref{equation.estimate.D}. First we notice $1/(1 + r/2)$ is both bounded above by $1$ and $2/r$, which implies that it is bounded by $2^{\phi}/r^{\phi}$ for all $0 \leq \phi \leq 1$, by interpolation. By using this result, we get that for all $0 \leq \phi < 1$,
\begin{align}
& \int_0^{(u - \varepsilon)_+}\!\!\!\int_0^{\infty}\frac{r e ^{-r(u - s)}}{1 + r/2}|\varphi(r|X_u^m - X_s^n + x^m - x^n|)|\,dr\,ds \leq 2^{\phi}\|\varphi\|_{\infty}\int_0^{(u - \varepsilon)_{+}}\int_0^{\infty}r^{1 - \phi}e^{-r(u - s)}\,dr\,ds\nonumber\\
= \, & 2^{\phi}\|\varphi\|_{\infty}\int_0^{\infty}r^{1 - \phi}e^{-r}\,dr\int_0^{(u - \varepsilon)_{+}}(u - s)^{\phi - 2}\,ds\nonumber\\
= \, & \frac{2^{\phi}\|\varphi\|_{\infty}\Gamma(2 - \phi)}{1 - \phi}\left\{\frac{1}{\left[u - (u - \varepsilon)_+\right]^{1 - \phi}} - \frac{1}{u^{1 - \phi}}\right\} \leq \frac{2^{\phi}\|\varphi\|_{\infty}\Gamma(2 - \phi)}{(1 - \phi)\varepsilon^{1 - \phi}}.
\end{align}
As for the second term in \eqref{equation.estimate.D}, we first note that the small and large $x$-behavior of $\sin x - x\cos x$ show that $\varphi(x)x^a$ is bounded for all $a \in [-1, 1]$ and that, moreover, for no other values of $a$ is $\varphi(x)x^a$ bounded. Therefore, one can estimate $|\varphi(y)|$ as $\|\varphi(x)x^a\|_{\infty}|y|^{-a}$ for each $a \in [-1, 1]$, and by doing this we find
\begin{align}
& \int_{(u - \varepsilon)_+}^u\int_0^{\infty}\frac{re^{-r(u - s)}}{1 + r/2}|\varphi(r|X_u^m - X_s^n + x^m - x^n|)|\,dr\,ds\nonumber\\
\leq \, & \|\varphi(x)x^{a}\|_{\infty}\int_0^{\infty}\frac{r^{1 - a}}{1 + r/2}\int_{(u - \varepsilon)_+}^u e^{-r(u - s)}|X_u^m - X_s^n + x^m - x^n|^{-a}\,ds\,dr\nonumber\\
\leq \, & \|\varphi(x)x^a\|_{\infty}\int_0^{\infty}\frac{r^{1 - a}}{1 + r/2}\left(\int_{(u - \varepsilon)_+}^u e^{-2r(u - s)}\,ds\right)^{1/2}\left(\int_{(u - \varepsilon)_+}^u|X_u^m - X_s^n + x^m - x^n|^{-2a}\,ds\right)^{1/2}\,dr\nonumber\\
\leq \, & 2^{-1/2}\|\varphi(x)x^a\|_{\infty}\int_0^{\infty}\frac{r^{1/2 - a}}{1 + r/2}\,dr\left(\int_{(u - \varepsilon)_+}^u |X_u^m - X_s^n + x^m - x^n|^{-2a}\,ds\right)^{1/2}.
\end{align}
\end{proof}
\begin{lemma}
\begin{equation}
\mathcal{E}_{m, n} \leq 2^{-1}(1 - \delta_{nm})\|\varphi(x)/x\|_1.
\end{equation}
\end{lemma}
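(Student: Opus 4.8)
The plan rests on one structural observation: in $\mathcal{E}_{m,n}$ the Brownian path enters only through the number $R \equiv |X_u^m - X_u^n + x^m - x^n|$, which is frozen at the fixed time $u$ and hence behaves as a constant under the $s$-, $t$-, and $r$-integrations. First I would dispose of the diagonal: if $m = n$ then $R = 0$, and since $\varphi(x) = (\sin x - x\cos x)/x^2 = x/3 + O(x^3)$ near the origin, $\varphi(0) = 0$, so $\mathcal{E}_{m,m} = 0$; this accounts for the factor $(1 - \delta_{nm})$, and from here on one may take $m \neq n$.

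Next I would pull $|\varphi(rR)|$ out of the time integrals, so that $\mathcal{E}_{m,n} = (1-\delta_{nm})\int_0^\Lambda |\varphi(rR)|\,r^2\,I(r)\,dr$ with
\[
I(r) \;\equiv\; \int_u^T\!\!\int_u^t e^{-r(t-s)}\,e^{-r^2(t-u)/2}\,e^{-r^2(s-u)/2}\,ds\,dt .
\]
Shifting $a = t-u$, $b = s-u$ and doing the elementary inner integral over $b$ gives $I(r) = \int_0^{T-u} e^{-r^2 a/2}\,\dfrac{e^{-r^2 a/2} - e^{-ra}}{r(1-r/2)}\,da$. The crucial point is that this $a$-integrand is nonnegative for \emph{every} $r > 0$: the factors $e^{-r^2 a/2} - e^{-ra}$ and $1 - r/2$ always carry the same sign (at $r = 2$ the integrand is read as its limit, $a e^{-4a}$). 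Hence $I(r)$ is nondecreasing in $T$ and bounded above by its limit, which a short computation gives as $\lim_{T\to\infty} I(r) = \dfrac{1}{r^3(1+r/2)} \le \dfrac{1}{r^3}$. Multiplying by $r^2$ and extending the $r$-integral to $[0,\infty)$ then yields $\mathcal{E}_{m,n} \le \int_0^\infty |\varphi(rR)|\,r^{-1}\,dr$, and the substitution $\rho = rR$ removes the last trace of the path: $\int_0^\infty |\varphi(rR)|\,r^{-1}\,dr = \int_0^\infty |\varphi(\rho)|\,\rho^{-1}\,d\rho = \tfrac12\|\varphi(x)/x\|_1$, the factor $\tfrac12$ coming from the evenness of $\varphi(x)/x$. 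This is the asserted inequality.

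The only point that needs genuine care is the uniform-in-$T$ control of $I(r)$. The naive bound $e^{-r(t-s)} \le 1$ is useless: it throws away exactly the power of $r$ that the substitution $\rho = rR$ requires in order to produce a finite, $R$-independent constant (without it one is left with $\int_0^\infty |\varphi(\rho)|\,\rho^{-2}\,d\rho$, which diverges at the origin because $\varphi(\rho) \sim \rho/3$). On the other hand, simply discarding the negative term $-e^{-ra}$ after the $b$-integration makes the bound explode as $r \to 2$, where $1 - r/2$ vanishes. The right move — and the heart of the proof — is to perform the $b$-integral exactly and then invoke monotonicity in $T$, which disposes of both obstructions at once and is valid for all $r > 0$ and all horizons $T$.
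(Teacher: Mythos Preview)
Your proof is correct and follows essentially the same route as the paper's: compute the double time integral, bound it by its $T\to\infty$ limit $1/[r^3(1+r/2)]\le 1/r^3$, and then scale out $R$ in the remaining $r$-integral. The only cosmetic difference is that the paper swaps the $s$- and $t$-integrations by Fubini and rewrites the exponent as $-(r+r^2/2)(t-s)-r^2(s-u)$, so that extending the region to $[u,\infty)^2$ (trivially justified since the \emph{original} integrand is nonnegative) gives $I(r)\le \frac{1}{r(1+r/2)}\cdot\frac{1}{r^2}$ directly, bypassing your sign analysis of $\big(e^{-r^2 a/2}-e^{-ra}\big)\big/(1-r/2)$.
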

\begin{proof}
By applying Fubini's Theorem we find that, for $m \neq n$,
\begin{align}
\mathcal{E}_{m, n} = \, & \int_{0}^{\Lambda}\!\!\!\int_u^T\!\!\int_s^T e^{-r(t - s)}e^{-r^2(t - u)/2}e^{-r^2(s - u)/2}|\varphi(r|X_u^m - X_u^n + x^m - x^n|)|r^2\,dt\,ds\,dr\nonumber\\
= \, & \int_0^{\Lambda}\!\!\!\int_u^T\!\!\int_s^T e^{-(r + r^2/2)(t - s)}e^{-r^2(s - u)}|\varphi(r|X_u^m - X_u^n + x^m - x^n|)|r^2\,dt\,ds\,dr\nonumber\\
\leq \, & \int_0^{\infty}\!\!\!\int_u^{\infty}\!\!\int_s^{\infty} e^{-(r + r^2/2)(t - s)}e^{-r^2(s - u)}|\varphi(r|X_u^m - X_u^n + x^m - x^n|)|r^2\,dt\,ds\,dr\nonumber\\
= \, & \int_0^{\infty}\frac{|\varphi(r|X_u^m - X_u^n + x^m - x^n|)|}{r(1 + r/2)}\,dr \leq \int_0^{\infty}|\varphi(r|X_u^m - X_u^n + x^m - x^n|)|r^{-1}\,dr\nonumber\\
= \, & 2^{-1}\|\varphi(x)/x\|_1.
\end{align}
For $n = m$ we simply note that everything vanishes, since $\varphi(0) = 0$.
\end{proof}
We now conclude from the previous two lemmas that
\begin{align}
\mathcal{C}_{m, n} \leq \, & 2^{-1}(1 - \delta_{nm})\|\varphi(x)/x\|_1 + \frac{2^{\phi}\|\varphi\|_{\infty}\Gamma(2 - \phi)}{(1 - \phi)\varepsilon^{1 - \phi}}\nonumber\\
\, & + 2^{-1/2}\|\varphi(x)x^{\theta/2}\|_{\infty}\int_0^{\infty}\frac{r^{-(\theta - 1)/2}}{1 + r/2}\,dr\left(\int_0^u\frac{1_{[0, \varepsilon]}(u - s)}{|X_u^m - X_s^n + x^m - x^n|^{\theta}}\,ds\right)^{1/2}\nonumber\\
\equiv \, & (1 - \delta_{nm})C + D_{\varepsilon} + F_{\theta}\left(\int_0^u\frac{1_{[0, \varepsilon]}(u - s)}{|X_u^m - X_s^n + x^m - x^n|^{\theta}}\,ds\right)^{1/2},
\end{align}
It follows then that
\begin{align}
\frac{E(D_u\mathcal{A}_T|\mathcal{F}_u)^2}{256\pi^2\alpha^2} & = \sum_{m = 1}^N\left(\sum_{n = 1}^N\mathcal{C}_{m, n}\right)^2 \leq N\sum_{m = 1}^N\sum_{n = 1}^N\mathcal{C}_{m, n}^2\nonumber\\
& \leq 2N^2(N - 1)(C + D_{\varepsilon})^2 + 2N^2 D_{\varepsilon}^2 + 2 N F_{\theta}^2\sum_{m, n}\int_0^u\frac{1_{[0, \varepsilon]}(u - s)}{|X_u^m - X_s^n + x^m - x^n|^{\theta}}\,ds.
\label{equation.estimate.malliavin}
\end{align}
\end{subsection}
\begin{subsection}{The Lower Bound}
By using the estimates \eqref{equation.clark.ocone.estimate}, \eqref{equation.estimate.malliavin} we obtain
\begin{align}
E\left(e^{\mathcal{A}_T}\right) & \leq e^{E(\mathcal{A}_T)}E\left[\exp\left(\frac{p^2}{2(p - 1)}\int_0^T\! E(D_t\mathcal{A}_T|\mathcal{F}_t)^2\,dt\right)\right]^{1 - 1/p}\nonumber\\
& \leq e^{E(\mathcal{A}_T)}e^{\gamma T}E\left[\exp\left(\sum_{m, n}\beta\int_0^T\!\!\!\int_0^t\frac{1_{[0, \varepsilon]}(t - s)}{|X_t^m - X_s^n + x^m - x^n|^{\theta}}\,ds\,dt\right)\right]^{1 - 1/p},
\label{equation.estimate.exponential.moment.action.nelson}
\end{align}
where
\begin{align}
\gamma & \equiv 256\pi^2\alpha^2N^2p\left[(N - 1)(C + D_{\varepsilon})^2 + D_{\varepsilon}^2\right] = 256\pi^2\alpha^2N^2 p\left[(N - 1)(C^2 + 2CD_{\varepsilon}) + N D_{\varepsilon}^2\right],\\
\beta & \equiv 256\pi^2\alpha^2 NF_{\theta}^2p^2(p - 1)^{-1},
\end{align}
We will perform now an estimate on the right side of \eqref{equation.estimate.exponential.moment.action.nelson} that is explained and proved in the appendix. It follows from a simple combinatorial argument. It is given by
\begin{align}
& E\left[\exp\left(\beta\sum_{m, n}\int_0^T\!\!\!\int_0^t\frac{1_{[0, \varepsilon]}(t - s)}{|X_t^m - X_s^n + x^m - x^n|^{\theta}}\,ds\,dt\right)\right]\nonumber\\
\leq & \, E\left[\exp\left(N\beta\int_0^T\!\!\!\int_0^t\frac{1_{[0, \varepsilon]}(t - s)}{|Y_t - Y_s|^{\theta}}\,ds\,dt\right)\right]\left\{\sup_{x \in \mathbb{R}^3}E\left[\exp\left(2N\beta\int_0^T\!\!\!\int_0^t\frac{1_{[0, \varepsilon]}(t - s)}{|Y_t - Z_s + x|^{\theta}}\,ds\,dt\right)\right]\right\}^{(N - 1)/2},
\label{equation.combinatorial.argument.used}
\end{align}
where $Y$ and $Z$ are independent 3D Brownian motions. Proceeding from here, we use \cite[Theorems 2.2 and 2.3]{BT}, to find
\begin{align}
E\left[\exp\left(\lambda\int_0^T\!\!\!\int_0^t\frac{1_{[0, \varepsilon]}(t - s)}{|Y_t - Y_s|^{\theta}}\,ds\,dt\right)\right] & \leq \exp\left[\left(A_{\theta}\lambda^{2/(2 - \theta)}\varepsilon^{2/(2 - \theta)} + \frac{B_{\theta}\varepsilon^{1 - \theta/2}\lambda}{1 - \theta/2}\right)T\right],\label{equation.upper.bounds.1}\\
\sup_{x \in \mathbb{R}^3}E\left[\exp\left(\lambda\int_0^T\!\!\!\int_0^t\frac{1_{[0, \varepsilon]}(t - s)}{|Y_t - Z_s + x|^{\theta}}\,ds\,dt\right)\right] & \leq \exp\left[2^{-\theta/(2 - \theta)}A_{\theta}\lambda^{2/(2 - \theta)}\varepsilon^{2/(2 - \theta)}T + o(T)\right],
\label{equation.upper.bounds.2}
\end{align}
where $A_{\theta}$ and $B_{\theta}$ are the explicit functions of $\theta$
\begin{gather}
A_{\theta} \equiv \frac{2^{(3\theta - 2)/(2 - \theta)}\theta^{\theta/(2 - \theta)}(2 - \theta)}{(3 - \theta)^{2\theta/(2 - \theta)}}, \quad B_{\theta} \equiv \frac{\theta\Gamma[(3 - \theta)/2]}{2^{\theta/2}\Gamma(3/2)}.
\end{gather}
From here, we find that
\begin{align}
E(e^{\mathcal{A}_T}) \leq \, & e^{E(\mathcal{A}_T)}e^{\gamma T}\exp\left[\left(A_{\theta}N^{2/(2 - \theta)}\beta^{2/(2 - \theta)}\varepsilon^{2/(2 - \theta)} + \frac{B_{\theta}\varepsilon^{1 - \theta/2}N\beta}{1 - \theta/2}\right)\frac{(p - 1)T}{p}\right]\nonumber\\
& \times\exp\left[A_{\theta}N^{2/(2 - \theta)}(N - 1)\beta^{2/(2 - \theta)}\varepsilon^{2/(2 - \theta)}(p - 1)p^{-1}T + o(T)\right]\nonumber\\
= \, & e^{E(\mathcal{A}_T)}e^{\gamma T}\exp\left[\left(A_{\theta}N^{(4 - \theta)/(2 - \theta)}\beta^{2/(2 - \theta)}\varepsilon^{2/(2 - \theta)} + \frac{B_{\theta}\varepsilon^{1 - \theta/2}N\beta}{1 - \theta/2}\right)\frac{(p - 1)T}{p} + o(T)\right],
\end{align}
and this is how we finally conclude that, from the Feynman-Kac formula \eqref{equation.ground.state.energy},
\begin{align}
& \, E_{\alpha, \mu}^{N, \Lambda} + Q_{\alpha, \mu}^{N, \Lambda}\nonumber\\
\geq & -2^8\pi^2 p\alpha^2N^2(N - 1)(C^2 + 2CD_{\varepsilon}) - 2^8\pi^2 p\alpha^2N^3 D_{\varepsilon}^2\nonumber\\
& - 2^{16/(2 - \theta)}\pi^{4/(2 - \theta)}A_{\theta}N^{(6 - \theta)/(2 - \theta)}\alpha^{4/(2 - \theta)}F_{\theta}^{4/(2 - \theta)}p^{(2 + \theta)/(2 - \theta)}(p - 1)^{-\theta/(2 - \theta)}\varepsilon^{2/(2 - \theta)}\nonumber\\
& -2^8\pi^2\alpha^2 B_{\theta}\varepsilon^{1 - \theta/2}N^2 p F_{\theta}^2(1 - \theta/2)^{-1}.
\label{equation.lower.bound}
\end{align}
We notice that from here the case $\mu = 0$ can be derived by taking $\mu \to 0$. (See the discussion right before \cite[Equation (1.2)]{GM}.) There are several parameters involved in \eqref{equation.lower.bound}, and fully optimizing the lower bound in all of them is beyond the scope the article. Many particular estimates may be derived from \eqref{equation.lower.bound}, however. For example, we may first pick $\theta = 3/2$ and then choose $\varepsilon = N^{-2}\alpha^{-2}$. The lower bound then becomes
\begin{gather}
E_{\alpha, \mu}^{N, \Lambda} + Q_{\alpha, \mu}^{N, \Lambda} \geq -p\alpha^2 N^2(N - 1)(C^2 + 2CD_{\varepsilon}) - p\alpha^2N^3 D_{\varepsilon}^2 - LN - M\alpha^{3/2}N^{3/2},
\end{gather}
where $L$ and $M$ are constants. Even though the $N$ and $\alpha$ behavior of the last two terms is clear, for the first two there is still a degree of freedom given by $\varepsilon$. In particular, the second term may be bounded above as
\begin{gather}
U\alpha^2 N^3\frac{(N^4\alpha^4)^{1 - \phi}}{(1 - \phi)^2},
\end{gather}
where $U$ is a constant. Assuming now that $N^2\alpha^2 \geq e$ we may select $1 - \phi = 1/\log(N^2\alpha^2)$. The expression then becomes
\begin{gather}
4Ue^2\alpha^2 N^3\log^2\left(\alpha N\right),
\end{gather}
where we used the fact that $x^{1/\log x}$ is $e$ for all $x > 0$. A similar analysis may be applied to the first term, leading to a bound of a similar form, except that the logarithmic term is now raised to the first power. After all these computations we conclude that
\begin{gather}
E_{\alpha, \mu}^{N, \Lambda} + Q_{\alpha, \mu}^{N, \Lambda} \geq -D\alpha^2N^3\log^2\left(\alpha N\right),
\end{gather}
for large enough $\alpha$, where $D$ is an explicit positive constant. Now, if $\alpha$ is sufficiently small, it is easy to deduce a lower bound of the form 
$-D\alpha^2 N^3\log^2 N$ when $N \geq 2$, and $-D\alpha^2$ when $N = 1$, by performing a similar procedure (one can choose $\varepsilon = N^{-2}$ when $N \geq 2$, for instance).
\end{subsection}
\end{section}
\begin{section}{Calculations for the Polaron Model}
\label{section.polaron}
We illustrate the previous techniques further in the polaron model of Herbert Fr\"{o}hlich \cite{Fr}. It describes the interaction of a non-relativistic electron with the optical phonon modes of a polar crystal. As the electron moves inside the crystal, it distorts the atom lattice locally, and this distorsion can be represented through quantized waves, called phonons. The polaron model is simpler to handle at the level of functional integrals than the Nelson model, and furthermore, does not require renormalization when the ultraviolet cutoff is removed. For this reason, the calculations in this section will be simpler than before, and some of them will be omitted, in the understanding that they follow from modifying certain arguments in the previous section accordingly. When there are $N$ electrons in the crystal the model has as Hamiltonian
\begin{gather}
H_{\alpha}^{N, \Lambda} = -\sum_{n = 1}^N\frac{\Delta_n}{2} + \int_{\mathbb{R}^3}\chi_{\Lambda}(k)a_k^{\dagger}a_k\,dk + \frac{\sqrt{\alpha}}{2^{3/4}\pi}\sum_{n = 1}^N\int_{\mathbb{R}^3}\frac{\chi_{\Lambda}(k)}{|k|}(e^{ikx_n}a_k + e^{-ikx_n}a_k^{\dagger})\,dk,
\label{equation.polaron.hamiltonian}
\end{gather}
acting on $L^2(\mathbb{R}^{3N})\otimes\mathcal{F}$, where $\mathcal{F}$ is the Fock space over $L^2(\mathbb{R}^3)$. As in the Nelson model, since $\chi_{\Lambda}(k)/|k|$ is an $L^2$ function, $H_{\alpha}^{N, \Lambda}$ is self-adjoint and bounded below \cite{N2}. (The reader should compare at this point \eqref{equation.polaron.hamiltonian} with \eqref{equation.nelson.hamiltonian}.) We do not treat the electrons as fermions, and make the unphysical assumption that the electrons do not repel each other. As for the Nelson model, one can derive an estimate for the ground state energy of $H_{\alpha}^{N, \Lambda}$ involving a Feynman-Kac-like formula, by following the method presented in the appendix, which involves the obtention of a path-integral representation of the kernel of $e^{-TH_{\alpha}^{N, \Lambda}}$. This representation is due essentially to Feynman, who directly integrated the quantum field variables \cite{F1, F2} (i.e. the Fock space), and was later rederived by Nelson by noting that the field variables are driven by an Olstein-Uhlenbeck process when computing matrix elements \cite{N1}. See also the Ph.D. thesis of the author for a direct integration of the phonon field using the Trotter product formula \cite{BPT}. In any case, one obtains the formula
\begin{gather}
\text{inf spec }H_{\alpha}^{N, \Lambda} \geq -\limsup_{T \to \infty}T^{-1}\log\left\{\sup_{x \in \mathbb{R}^3}E\left[\exp\left(\mathcal{A}_T^x\right)\right]\right\},
\end{gather}
with
\begin{gather}
\mathcal{A}_T^x \equiv \frac{\alpha}{2^{3/2}\pi^2}\sum_{m, n}\int_0^T\!\!\!\int_0^t\!\!\int_{|k| \leq \Lambda}\frac{e^{-(t - s)}}{k^2}e^{-ik(X_t^m - X_s^n + x^m - x^n)}\,dk\,ds\,dt.
\end{gather}

We start with the computation of the expectation. It can be carried out easily, in complete analogy to that of the Nelson model action, and one encounters that $E(\mathcal{A}_T)$ (omitting the superscript $x$) can be bounded from above as
\begin{align}
& \sqrt{2}\alpha\pi^{-1}N\int_0^T\!\!\!\int_0^t e^{-(t - s)}\int_0^{\Lambda}e^{-r^2(t - s)/2}\,dr\,ds\,dt\nonumber\\
& + \, \sqrt{2}\alpha\pi^{-1} N(N - 1)\int_0^T\!\!\!\int_0^t e^{-(t - s)}\int_0^{\Lambda}e^{-r^2(t + s)/2}\,dr\,ds\,dt\nonumber\\
\leq & \, \alpha\pi^{-1/2}N\int_0^T\!\!\!\int_0^t\frac{e^{-(t - s)}}{\sqrt{t - s}}\,ds\,dt + \alpha\pi^{-1/2}N(N - 1)\int_0^T\!\!\!\int_0^t \frac{e^{-(t - s)}}{\sqrt{t + s}}\,ds\,dt\nonumber\\
\leq & \, \alpha \pi^{-1/2}N T\int_0^{\infty}\frac{e^{-t}}{\sqrt{t}}\,dt + \alpha\pi^{-1/2}N(N - 1)\int_0^T e^{s}\int_s^T\frac{e^{-t}}{\sqrt{t}}\,dt\,ds.
\label{equation.expectation.polaron}
\end{align}
The last term can be easily seen to sublinear in $T$, as a simple integration by parts shows:
\begin{gather}
\int_0^T e^{s}\int_s^T\frac{e^{-t}}{\sqrt{t}}\,dt\,ds = -\int_0^T\frac{e^{-t}}{\sqrt{t}}\,dt + 2\sqrt{T};
\end{gather}
and since $\int_0^{\infty}t^{-1/2}e^{-t}\,dt = \Gamma(1/2) = \pi^{1/2}$, we obtain
\begin{gather}
E(\mathcal{A}_T) \leq \alpha N T + o(T).
\end{gather}

The computation of $E(D_u\mathcal{A}_T|\mathcal{F}_u)$ is also entirely analogous to that for the Nelson model, and for this reason we shall not perform it explicitly, but merely state the final result, that can be obtained as before, mutatis mutandis:
\begin{align}
-\frac{\sqrt{2}\alpha}{\pi}\sum_{m, n}(2 - \delta_{m, n})&\int_u^T\!\!\!\int_0^t\frac{(X_u^m - X_{s\wedge u}^n + x^m - x^n)^m}{|X_u^m - X_{s\wedge u}^n + x^m - x^n|}\left[1 + \Theta(s - u)\right]\nonumber\\
&\qquad\quad\times\int_0^{\Lambda}e^{-(t - s)}e^{-r^2(t - u)/2}e^{-r^2(s - u)_+/2}\varphi(r|X_u^m - X_{s\wedge u}^n + x^m - x^n|)r\,dr\,ds\,dt.
\end{align}
From here we obtain, easily,
\begin{gather}
E(D_u\mathcal{A}_T|\mathcal{F}_u)^2 \leq \frac{2\alpha^2}{\pi^2}\sum_{m = 1}^N\left(\sum_{n = 1}^N\mathcal{M}_{m, n}\right)^2,
\end{gather}
with
\begin{align}
\mathcal{M}_{m, n} \equiv (2 - \delta_{m, n})&\int_u^T\!\!\!\int_0^t\left[1 + \Theta(s - u)\right] e^{-(t - s)}\nonumber\\
&\qquad\quad\times\int_0^{\Lambda}e^{-r^2(t - u)/2}e^{-r^2(s - u)_{+}/2}|\varphi(r|X_u^m - X_{s\wedge u}^n + x^m - x^n|)|r\,dr\,ds\,dt.
\end{align}
$\mathcal{M}_{m, n}$ is in fact a quantity that can be bounded by a number. To see this, we split $\mathcal{M}_{m, n}(2 - \delta_{m, n})^{-1}$ as before, into an $[0, u]$-integration and an $[u, t]$-one. The first integral yields
\begin{align}
& \int_u^T\!\!\!\int_0^u e^{-(t - s)}\int_0^{\Lambda} e^{-r^2(t - u)/2}|\varphi (r|X_u^m - X_s^n + x^m - x^n|)|r\,dr\,ds\,dt\nonumber\\
= & \, \int_0^{\Lambda}r\int_u^T e^{-(1 + r^2/2)(t - u)}\,dt \int_0^u e^{-(u - s)} |\varphi(r|X_u^m - X_s^n + x^m - x^n|)|\,ds\,dr\nonumber\\
\leq & \, 2\int_0^u e^{-(u - s)}\int_0^{\infty}r^{-1}|\varphi(r|X_u^m - X_s^n + x^m - x^n|)|\,dr\,ds \leq \|\varphi(x)/x\|_1,
\end{align}
while for the second we get (for $m \neq n$)
\begin{align}
& 2\int_u^T\!\!\!\int_u^t e^{-(t - s)}\int_0^{\Lambda}e^{-r^2(t - u)/2}e^{-r^2(s - u)/2}|\varphi(r|X_u^m - X_u^n + x^m - x^n|)|r\,dr\,ds\,dt\nonumber\\
= \, & 2\int_0^{\Lambda}r|\varphi(r|X_u^m - X_u^n + x^m - x^n|)|\int_u^T e^{-r^2(s - u)}\int_s^T e^{-(1 + r^2/2)(t - s)}\,dt \,ds\,dr\nonumber\\
\leq \, & 2\int_0^{\infty}\frac{|\varphi(r|X_u^m - X_u^n + x^m - x^n|)|}{r(1 + r^2/2)}\,dr \leq \|\varphi(x)/x\|_1,
\end{align}
from which we conclude that, for all $m, n$,
\begin{gather}
\mathcal{M}_{m, n} \leq (4 - 3\delta_{mn})\|\varphi(x)/x\|_1,
\end{gather}
and therefore
\begin{align}
E(D_u\mathcal{A}_T|\mathcal{F}_u)^2 \leq \frac{2\alpha^2}{\pi^2}\|\varphi(x)/x\|_1^2 N(4N - 3)^2.
\end{align}
A numerical evaluation then shows that $E(D_u\mathcal{A}_T|\mathcal{F}_u)^2 \leq 0.76\alpha^2 N(4N - 3)^2$. From here we conclude that
\begin{gather}
\text{inf spec }H_{\alpha}^{N, \Lambda} \geq - \alpha N - 0.76\alpha^2 N(4N - 3)^2.
\label{equation.ground.state.energy.polaron}
\end{gather}

There are three important remarks in order, concerning \eqref{equation.ground.state.energy.polaron}. One is that a lower bound for the ground-state energy of the polaron Hamiltonian without an ultraviolet cutoff can be obtained by taking the limit $\Lambda \to \infty$ on the left side of \eqref{equation.ground.state.energy.polaron}, as in the Nelson model (see the corresponding discussion for that case, above, by equation \eqref{equation.main.result}). Second, proceeding formally, a substantially better estimate can be obtained for the case without cutoffs if one starts directly from the action $\mathcal{A}_T$ when $\Lambda = \infty$. Here the resulting action is quite simple, and by means of the combinatorial argument appearing in the appendix, one can split it into a diagonal and off-diagonal part; by then using the results in \cite{BT}, one finds a lower bound for the polaron without cutoffs equal to $-\alpha N - \alpha^2 N^3/4$, which is an extension for $N > 1$ of a result in \cite{BT}. We omit the details, since our main goal here was to provide a lower bound for the Hamiltonian $H_{\alpha}^{N, \Lambda}$, valid for all values of the cutoff $\Lambda$. As the third and final remark, the behavior $N^3$ for large $N$ is consistent with an upper bound for the ground-state energy of the polaron without cutoffs obtained in \cite{BB1}, $-0.109\alpha^2N^3$, which in turn was computed following an idea of Pekar \cite{P}, namely inserting a product trial state into the Hamiltonian $H_{\alpha}^{N, \Lambda}$ with a coherent state for the boson field.
\end{section}
\begin{appendix}
\setcounter{section}{1}
\setcounter{equation}{0}
\begin{section}*{Appendix A: A Feynman-Kac Estimate for the Nelson Model}
Let $H$ be a self-adjoint, bounded-below operator acting on a Hilbert space $\mathcal{H}$. The following formula is easy to prove: for any $\phi$ in $\mathcal{H}$,
\begin{gather}
-\lim_{T \to \infty}T^{-1}\log\left(\phi, e^{-TH}\phi\right) = \inf\left[\supp\left(\mu_{\phi}\right)\right],
\label{equation.infimum.spectral.measure}
\end{gather}
where $\mu_{\phi}$ is the spectral measure associated to $\phi$. A proof can be found in, for example, \cite[Lemma 2.1]{AH}. We shall denote \eqref{equation.infimum.spectral.measure} by $I(\phi)$. From the formula
\begin{gather}
\text{inf spec }H = \inf_{\phi \in S}I(\phi),
\end{gather}
where $S$ is any dense subset of the Hilbert space $\mathcal{H}$, it follows that a universal lower bound on $I(\phi)$ (valid for all $\phi \in S$) will provide a lower bound on the infimum of the spectrum of $H$. We shall focus now on providing an upper bound on $(\phi, e^{-TH}\phi)$ ($\phi \in S$). $H$ will be the massive Nelson model Hamiltonian, as defined previously in this article, and the set $S$ in question we shall choose is the algebraic tensor product $C_0^{\infty}(\mathbb{R}^{3N})\otimes \mathcal{F}$, where $\mathcal{F}$ is the Fock space on $L^2(\mathbb{R}^3)$. (We remark that this is not the Hilbert-space tensor product $\overline{C_0^{\infty}(\mathbb{R}^{3N})\otimes \mathcal{F}}$, which is the closure of the algebraic version. Even though they are typically denoted the same in the literature, the context makes the meaning of the notation clear, so there is usually no risk of confusion between the two.) Any element in $S$ may be written as
\begin{gather}
\sum_{m = 1}^M\psi_m\otimes\left|\xi_m\right>,
\end{gather}
where $M$ is a natural number, $\psi_m$ is in $C_0^{\infty}(\mathbb{R}^{3N})$, and $\left|\xi\right>$ denotes the coherent state of $\mathcal{F}$ defined by the formal relationship $a_k\left|\xi\right> = \xi_k\left|\xi\right>$, for complex numbers $\xi_k$, $k \in \mathbb{R}^3$, with the normalization $\left<0\,\right|\left.\!\xi\right> = 1$. Now, in the Ph.D. thesis of the author a rather complicated formula was derived for $e^{-TH}\psi\otimes\left|\xi\right>$, where $H$ is the Nelson model Hamiltonian, $\psi$ is in $L^2(\mathbb{R}^{3N})$, and $\left|\xi\right>$ is a coherent state of $\mathcal{F}$, as just defined. It is given by
\begin{gather}
e^{-TH}\left[\psi\otimes\left|\xi\right>\right](x) = E^x\left\{\exp\left[\alpha\sum_{n, m}\int_0^T\!\!\!\int_0^t\!\!\int\chi_{\Lambda}(k)\omega(k)^{-1}e^{-ik(X_t^m - X_s^n)}e^{-\omega(k)(t - s)}\,dk\,ds\,dt\right.\right.\nonumber\\
\left.\left. - \sqrt{\alpha}\sum_{n = 1}^N\int_0^T\!\!\!\int\chi_{\Lambda}(k)\omega(k)^{-1/2}e^{-\omega(k)(T - t)}e^{ikX_t^n}\xi_k\,dk\,dt\right]\psi(X_T)\right.\nonumber\\
\otimes\left.\left|-\chi_{\Lambda}\sqrt{\frac{\alpha}{\omega}}\sum_{n = 1}^N\int_0^T e^{-ikX_t^n}e^{-t\omega}\,dt + \xi e^{-T\omega}\right>\right\},
\end{gather}
and therefore $\left(\phi\otimes\left|\eta\right>, e^{-TH}\psi\otimes\left|\xi\right>\right)$ is equal to
\begin{gather}
\int_{\mathbb{R}^3}\overline{\phi(x)}E^x\left\{\exp\left[\alpha\sum_{n, m}\int_0^T\!\!\!\int_0^t\!\!\int\chi_{\Lambda}(k)\omega(k)^{-1}e^{-ik(X_t^m - X_s^n)}e^{-\omega(k)(t - s)}\,dk\,ds\,dt\right.\right.\nonumber\\
\left.\left. - \sqrt{\alpha}\sum_{n = 1}^N\int_0^T\!\!\!\int\chi_{\Lambda}(k)\omega(k)^{-1/2}e^{-\omega(k)(T - t)}e^{ikX_t^n}\xi_k\,dk\,dt\right]\psi(X_T)\right.\nonumber\\
\left<\eta\left|-\chi_{\Lambda}\sqrt{\frac{\alpha}{\omega}}\sum_{n = 1}^N\int_0^T e^{-ikX_t^n}e^{-t\omega}\,dt + \xi e^{-T\omega}\right>\right\}\,dx,
\end{gather}
from which it follows that $\left|\left(\phi\otimes\left|\eta\right>, e^{-TH}\psi\otimes\left|\xi\right>\right)\right|$ is less than or equal to
\begin{align}
\|\psi\|_{\infty}\|\phi\|_{\infty}|\supp{\phi}|\sup_{x \in \mathbb{R}^3}& E^x\left\{\exp\left[\alpha\sum_{n, m}\int_0^T\!\!\!\int_0^t\!\!\int\chi_{\Lambda}(k)\omega(k)^{-1}\cos\left[k(X_t^m - X_s^n)\right] e^{-\omega(k)(t - s)}\,dk\,ds\,dt\right.\right.\nonumber\\
&\left.\left. \qquad\qquad + \, \sqrt{\alpha}N\int_0^T\!\!\!\int\chi_{\Lambda}(k)\omega(k)^{-1/2}e^{-\omega(k)(T - t)}|\xi_k|\,dk\,dt\right]\right.\nonumber\\
&\qquad\qquad\times\left|\left<\eta\left|-\chi_{\Lambda}\sqrt{\frac{\alpha}{\omega}}\sum_{n = 1}^N\int_0^T e^{-ikX_t^n}e^{-t\omega}\,dt + \xi e^{-T\omega}\right>\right|\right\}.
\label{equation.estimate.inner.product}
\end{align}
Now, for two coherent states of the field $\xi$ and $\eta$ (normalized as before) one has the following formula for their inner product,
\begin{gather}
\left(\xi, \eta\right) = \exp\left(\int_{\mathbb{R}^3}\overline{\xi_k}\eta_k\,dk\right),
\end{gather}
and, in particular, $\int_{\mathbb{R}^3}|\xi_k|^2\,dk < \infty$. We now bound some of the terms in \eqref{equation.estimate.inner.product} as follows
\begin{gather}
\int_0^T\!\!\!\int\chi_{\Lambda}(k)\omega(k)^{-1/2}e^{-\omega(k)(T - t)}|\xi_k|\,dk\,dt \leq \left(\int\chi_{\Lambda}(k)\omega(k)^{-3}\,dk\right)^{1/2}\left(\int |\xi_k|^2\,dk\right)^{1/2},
\end{gather}
\begin{align}
& \left|\left<\eta\left|-\chi_{\Lambda}\sqrt{\frac{\alpha}{\omega}}\sum_{n = 1}^N\int_0^T e^{-ikX_t^n}e^{-t\omega}\,dt + \xi e^{-T\omega}\right>\right|\right.\nonumber\\
\leq & \exp\left(\int_{\mathbb{R}^3}N|\eta_k|\chi_{\Lambda}(k)\alpha^{1/2}\omega(k)^{-3/2}\left(1 - e^{-T\omega}\right)\,dk + \int_{\mathbb{R}^3}|\eta_k\xi_k|e^{-T\omega(k)}\,dk\right)\nonumber\\
\leq & \exp\left[N\alpha^{1/2}\left(\int_{\mathbb{R}^3}|\eta_k|^2\,dk\right)^{1/2}\left(\int_{\mathbb{R}^3}\chi_{\Lambda}(k)\omega(k)^{-3}\,dk\right)^{1/2} + \left(\int_{\mathbb{R}^3}|\eta_k|^2\,dk\right)^{1/2}\left(\int_{\mathbb{R}^3}|\xi_k|^2\,dk\right)^{1/2}\right].
\end{align}
From these considerations we conclude that
\begin{align}
& \left(\sum_{m = 1}^M\psi_m\otimes\left|\xi_m\right>, e^{-TH}\sum_{m = 1}^M\psi_m\otimes\left|\xi_m\right>\right) \leq \sum_{m, n = 1}^M\left|\left(\psi_m\otimes\left|\xi_m\right>, e^{-TH}\psi_n\otimes\left|\xi_n\right>\right)\right|\nonumber\\
\leq & \, C_{\psi_1, \ldots, \psi_M, \xi_1, \ldots , \xi_M}\sup_{x \in \mathbb{R}^3}E^x\left\{\exp\left[\alpha\sum_{m, n}\int_0^T\!\!\!\int_0^t\!\!\int\frac{\chi_{\Lambda}(k)}{\omega(k)}\cos\left[k(X_t^m - X_s^n)\right]e^{-\omega(k)(t - s)}\,dk\,ds\,dt\right]\right\},
\end{align}
with $C$ being independent of $T$. From here we conclude that
\begin{align}
& I\left(\sum_{m = 1}^M\psi_m\otimes\left|\xi_m\right>\right)\nonumber\\
\geq & -\limsup_{T \to \infty}T^{-1}\log\left(\sup_{x \in \mathbb{R}^3}E^x\left\{\exp\left[\alpha\sum_{m, n}\int_0^T\!\!\!\int_0^t\!\!\int\frac{\chi_{\Lambda}(k)}{\omega(k)}\cos\left[k(X_t^m - X_s^n)\right]e^{-\omega(k)(t - s)}\,dk\,ds\,dt\right]\right\}\right),
\end{align}
which proves the assertion.
\end{section}
\setcounter{section}{2}
\setcounter{equation}{0}
\begin{section}*{Appendix B: A Simple Combinatorial Argument}
\label{section.combinatorial.argument}
In equation \eqref{equation.combinatorial.argument.used} an estimate was provided on a certain functional integral that resulted from estimating the exponential moment of the Nelson model action. In this appendix we shall proceed to provide a rather general combinatorial argument, from which that particular estimate will follow easily. We will commence by constructing a partition of the set $J_N \equiv \left\{(i, j) : 1 \leq i \leq j \leq N\right\}$ that will be appropriate for our purposes later in this section. Pairs of numbers $(i, j)$ will represent pairs of particles later on. The problem at hand is simple: What is the smallest cardinality of a partition of $J_N$ in which there are no repeated numbers for different pairs in each one of the partition elements? An example of a partition of $J_2$ we would be interested in is $\left\{(1, 1), (2, 2)\right\} \cup \left\{(1, 2)\right\}$, since in the first element there are no repeated numbers for different pairs, nor are there in the second one. On the other hand, $\left\{(1, 1), (1, 2)\right\} \cup \left\{(2, 2)\right\}$ would be discarded, as 1 appears in the two pairs of the first element of the partition. From these considerations, it is clear that the answer to the question posed for $N = 2$ is 2. The answer to the question for any $N$ turns out to be simply $N$, and it is what we will proceed to prove now. We shall call a partition of $J_N$ with the property that its elements have no repeated numbers for different pairs ``admissible.''

\begin{theorem}The smallest cardinality of an admissible partition of $J_N$ is $N$.\end{theorem}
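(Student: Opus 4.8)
The plan is to prove the theorem in two halves: a lower bound showing no admissible partition can have fewer than $N$ elements, and an upper bound exhibiting an explicit admissible partition with exactly $N$ elements.

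For the lower bound, first I would observe that the $N$ diagonal pairs $(1,1), (2,2), \ldots, (N,N)$ must all lie in distinct partition elements. Indeed, if two diagonal pairs $(i,i)$ and $(j,j)$ with $i \neq j$ belonged to the same element, that would cause no conflict by itself, so this naive argument fails; instead I would count differently. The cleaner route: fix any admissible partition $\mathcal{P} = \{P_1, \ldots, P_r\}$. Each element $P_\ell$ is a set of pairs in which no number $1, \ldots, N$ is repeated across different pairs. So if $P_\ell$ contains $k_\ell$ pairs, these pairs involve at least $k_\ell$ distinct numbers if one of them is diagonal, or more precisely the multiset of endpoints has size $2k_\ell$ but as a set has size at least $k_\ell$ (with equality only when exactly one pair is diagonal and the rest impossible — so actually each off-diagonal pair uses two fresh numbers and a diagonal pair uses one). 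The key numerical fact is that $|J_N| = \binom{N}{2} + N = \binom{N+1}{2}$. I would bound $|P_\ell|$ from above: an admissible block on the number set $\{1, \ldots, N\}$ is a partial matching possibly together with at most one loop, hence $|P_\ell| \leq \lfloor N/2 \rfloor + 1$ roughly; but this bound is too weak to force $r \geq N$ directly since $(\lfloor N/2\rfloor + 1) \cdot (N/2) \ll \binom{N+1}{2}$ is false — in fact it is comparable. Let me reconsider: the right statement is that a block contains at most one diagonal pair (since $(i,i)$ and $(i,i)$... no, distinct diagonal pairs $(i,i),(j,j)$ share no number, so a block may contain many diagonal pairs). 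The genuinely forcing observation is about a single fixed index: consider index $1$. The pairs containing $1$ are $(1,1), (1,2), \ldots, (1,N)$, which is $N$ pairs; by admissibility no two of them can be in the same block (each shares the number $1$), so they occupy $N$ distinct blocks, giving $r \geq N$. That is the whole lower bound.

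For the upper bound I would construct an admissible partition of size exactly $N$, which is a standard round-robin tournament scheduling construction. Label the blocks $B_0, B_1, \ldots, B_{N-1}$ and place pair $(i,j)$ (with $1 \leq i \leq j \leq N$) into block $B_{(i+j) \bmod N}$. I would then verify admissibility: if $(i,j)$ and $(i',j')$ lie in the same block and share a number, say $i = i'$, then $j \equiv j' \pmod N$ with $j, j' \in \{1, \ldots, N\}$ forces $j = j'$; the remaining case $i = j'$ needs slightly more care and I would check that $i + j \equiv i' + j' \equiv j + i' \pmod N$ together with the ordering constraints forces $(i,j) = (i',j')$ — here one uses that $2i \equiv i + j' = $ sum involving matched index, and a short case analysis on parity of $N$ handles whether $2x \equiv 2y$ implies $x \equiv y$. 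This is the one spot with a minor subtlety (when $N$ is even, $2x \equiv 2y \pmod N$ does not imply $x \equiv y$), so the indexing may need a small tweak — e.g. using the classical construction where one element is fixed and the others rotate — but some variant certainly works and produces exactly $N$ blocks since there are $N$ residues mod $N$ and each is nonempty.

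The main obstacle I anticipate is not the lower bound (which is the clean one-index pigeonhole argument above) but making the explicit $N$-block construction watertight, particularly the parity issue in the residue-class scheme when $N$ is even. I would resolve this by appealing to the well-known fact that the edge set of $K_N$ (off-diagonal pairs) decomposes into $N-1$ perfect/near-perfect matchings plus using the $N$-th class for the $N$ diagonal loops — or more slickly, decompose the edges and loops of the complete graph with a loop at every vertex, $K_N^{\circ}$, which is $N$-regular, into $N$ perfect matchings by a direct construction, each matching being exactly one admissible block. Verifying that such a decomposition exists and has no repeated indices within a class is then routine graph theory, and the count $N$ matches the lower bound, completing the proof.
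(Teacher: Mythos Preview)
Your proof is correct and essentially matches the paper's: the lower bound via the $N$ pairs containing the index $1$ is identical to the paper's argument, and your upper bound construction $(i,j)\mapsto (i+j)\bmod N$ is precisely the paper's commutative Latin square construction specialized to the Cayley table of $\mathbb{Z}/N\mathbb{Z}$. Your parity worry is unfounded --- if two pairs in $J_N$ share an index (in either slot) and have the same sum mod $N$, a short case check using $1\le i\le j\le N$ forces them to coincide without any factor of $2$ ever appearing, so no tweak is needed.
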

\begin{proof}
We first notice that an admissible partition of $J_N$ must have at least $N$ elements. For $(1, 1)$ must be in one of the partition elements, say $P_1$; $(1, 2)$ ought to be in a partition element $P_2$, with $P_2 \neq P_1$ (for otherwise the partition would be inadmissible); and similarly, $(1, 3)$ belongs to $P_3 \neq P_2, P_1$. By continuing this way, we arrive at a partition element $P_N \neq P_{N - 1}, \ldots , P_1$, and in this manner we see that the partition has at least $N$ elements.

We will now construct an admissible partition of size $N$. As a matter of fact, we will find them all: The admissible partitions of size $N$ will turn out to be in a one-to-one correspondence with the commutatitve latin squares of size $N$. We recall the reader that a commutative latin square of size $N$ is a square of $N \times N$ slots with exactly $N$ numbers appearing without repetitions along rows or columns, and such that the transposed square is equal to the original one. As an example, the additive table of a commutative group of cardinality $N$ is a commutative latin square. To show now the correspondence, consider a commutative latin square $S$, and label its entries as in a matrix, meaning that $S(1, 1)$ is the top leftmost element, $S(1, N)$ is the top rightmost element, etc. In the upper-right triangle $\left\{S(i, j) : 1 \leq i \leq j \leq N\right\}$ all numbers $1, 2, \ldots , N$ appear. If $S$ is now seen as a function from $J_N$ to the set of numbers $\left\{1, 2, \ldots , N\right\}$, the partition $S^{-1}\!\left\{1\right\} \cup \ldots \cup S^{-1}\!\left\{N\right\}$ is admissible. Indeed, a repeated number in two different pairs of an element $S^{-1}\left\{k\right\}$ would violate the very definition of a latin square. To finish the proof we just note that any admissible partition of size $N$ gives rise to a commutative latin square in the obvious way: If the partition is written as $Q_1 \cup \ldots \cup Q_N$ one may color the upper triangular part of an empty $N \times N$ matrix with the number of the partition element the pair $(i, j)$ belongs to: for example, if $(1, 3)$ belongs to $Q_4$, then one may draw a 4 in the slot with row 1 and column 3. One then fills the lower left triangle with the values of the upper right triangle as follows: the slot $(i, j)$ takes the value of the slot $(j, i)$. The resulting matrix is a commutative latin square.
\end{proof}
From now on we shall refer to admissible partitions of minimal size as ``optimal.'' The reader may refer to \cite{HCD} for more information on latin squares in particular and block (or partition) design in general. We will now use the theorem just provided to give a simple proof of the following result, that will imply in particular the estimate mentioned at the beginning of the present appendix, equation \eqref{equation.combinatorial.argument.used}.
\begin{theorem}
\label{theorem.main.combinatorial.result}
Let $\left\{\Gamma_{m, n} : 1 \leq m \leq n \leq N\right\}$ be a collection of non-negative random variables with the following property: for each subset $S$ of $J_N$ with no repeated numbers for different pairs, one has that the elements in $\left\{\Gamma_{m, n} : (m, n) \in S\right\}$ are independent. Then,
\begin{gather}
E\left(\prod_{m \leq n}\Gamma_{m, n}\right) \leq \prod_{m \leq n}E\left(\Gamma_{m, n}^N\right)^{1/N}.
\end{gather}
\label{theorem.combinatorial.argument}
\end{theorem}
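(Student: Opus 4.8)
The plan is to reduce the inequality to Hölder's inequality using the combinatorial fact just established. First I would invoke the previous theorem to fix an \emph{optimal} admissible partition $J_N = Q_1 \cup \cdots \cup Q_N$ into exactly $N$ blocks; concretely one can take the partition induced by the addition table of $\mathbb{Z}/N\mathbb{Z}$, viewed as a commutative latin square, as in the proof above. Since the partition is admissible, each block $Q_i$ has no repeated numbers across distinct pairs, so $Q_i$ is precisely a subset of $J_N$ of the type appearing in the hypothesis; consequently $\{\Gamma_{m,n} : (m,n) \in Q_i\}$ is an independent family for every fixed $i$, and hence so is $\{\Gamma_{m,n}^N : (m,n) \in Q_i\}$.

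Next I would group the full product by blocks and apply Hölder's inequality with $N$ exponents, each equal to $N$ (so that $\sum_{i=1}^N 1/N = 1$), using that all the $\Gamma_{m,n}$ are non-negative:
\[
E\!\left(\prod_{m \le n}\Gamma_{m,n}\right) = E\!\left(\prod_{i=1}^N \prod_{(m,n)\in Q_i}\Gamma_{m,n}\right) \le \prod_{i=1}^N E\!\left[\left(\prod_{(m,n)\in Q_i}\Gamma_{m,n}\right)^{\!N}\right]^{1/N}.
\]
If some moment $E(\Gamma_{m,n}^N)$ is infinite the claimed bound is trivial, so one may assume all these moments are finite, which also makes the right-hand side above legitimate. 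Then, by the independence within each block,
\[
E\!\left[\left(\prod_{(m,n)\in Q_i}\Gamma_{m,n}\right)^{\!N}\right] = E\!\left[\prod_{(m,n)\in Q_i}\Gamma_{m,n}^{N}\right] = \prod_{(m,n)\in Q_i}E\!\left(\Gamma_{m,n}^N\right),
\]
and substituting this back, together with the fact that $\{Q_i\}_{i=1}^N$ partitions $J_N$, yields
\[
E\!\left(\prod_{m \le n}\Gamma_{m,n}\right) \le \prod_{i=1}^N \prod_{(m,n)\in Q_i}E\!\left(\Gamma_{m,n}^N\right)^{1/N} = \prod_{m \le n}E\!\left(\Gamma_{m,n}^N\right)^{1/N},
\]
which is the assertion. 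An empty block, should one occur, contributes an empty product equal to $1$ and causes no difficulty.

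The argument is essentially routine once the partition is in hand; the only point requiring any care is to match the independence hypothesis exactly to admissibility — that is, to recognize that the ``no repeated numbers for different pairs'' condition imposed on $S$ in the statement is literally the defining property of a block of an admissible partition, so the hypothesis applies verbatim to each $Q_i$. The genuine input, namely that $N$ exponents suffice (equivalently, that an admissible partition of size $N$ exists), has already been supplied by the preceding theorem via commutative latin squares, so no further obstacle arises here. Finally, to obtain the estimate \eqref{equation.combinatorial.argument.used} one simply specializes, taking $\Gamma_{m,m}$ and $\Gamma_{m,n}$ ($m<n$) to be the exponentials of the diagonal and (symmetrized) off-diagonal double integrals appearing there, and invoking one extra Cauchy--Schwarz step together with the distributional symmetry $Y \leftrightarrow Z$ to split each off-diagonal factor into the stated form with the supremum over $x$.
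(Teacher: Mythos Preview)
Your proof is correct and follows essentially the same route as the paper: fix an optimal admissible partition of $J_N$ into $N$ blocks (using the preceding theorem), apply H\"older's inequality with $N$ equal exponents to the block decomposition, and then factor each block expectation using the independence guaranteed by admissibility. Your added remarks on infinite moments, empty blocks, and the specialization to \eqref{equation.combinatorial.argument.used} are accurate and do not depart from the paper's argument.
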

\begin{proof}
Let $K$ be an optimal partition of $J_N$, which we write as $J_N = K_1\cup K_2\cup \ldots\cup K_N$, where the $K_i$'s are non-repeating and disjoint. It then follows from H\"{o}lder's inequality for $N$ elements and equal coefficients that
\begin{gather}
E\left(\prod_{m \leq n}\Gamma_{m, n}\right) = E\left(\prod_{i = 1}^N\prod_{(m, n) \in K_i}\Gamma_{m, n}\right) \leq \prod_{i = 1}^N E\left(\prod_{(m, n) \in K_i}\Gamma_{m, n}^N\right)^{1/N}.
\end{gather}
The proof now follows from the fact that $K$ is admissible:
\begin{gather}
\prod_{i = 1}^N E\left(\prod_{(m, n) \in K_i}\Gamma_{m, n}^N\right)^{1/N} = \prod_{i = 1}^N\prod_{(m, n) \in K_i} E\left(\Gamma_{m, n}^N\right)^{1/N} = \prod_{m \leq n}E\left(\Gamma_{m, n}^N\right)^{1/N}.
\end{gather}
\end{proof}
From Theorem \ref{theorem.main.combinatorial.result} equation \eqref{equation.combinatorial.argument.used} follows immediately. By setting $x$ equal to a fixed vector in $\mathbb{R}^{3N}$, and
\begin{align}
\Upsilon_{m, n}(y) \equiv & \exp\left(\beta\int_0^T\!\!\!\int_0^t\frac{1_{[0, \varepsilon]}(t - s)}{{|X_t^m - X_s^n + y|}^{\theta}}\,ds\,dt\right),\\
\Omega_{m, n} \equiv & \, \Upsilon_{m, n}(x^m - x^n),\\
\Gamma_{m, n} \equiv &
\begin{cases}
\Omega_{m, m}\qquad &\text{if $m = n$},\\
\Omega_{m, n}\Omega_{n, m}\qquad &\text{if $m \neq n$},
\end{cases}
\end{align}
we obtain
\begin{align}
& E\left(\prod_{m, n}\Omega_{m, n}\right) = E\left(\prod_{m \leq n}\Gamma_{m, n}\right) \leq \prod_{m = 1}^N E(\Gamma_{m, m}^N)^{1/N}\prod_{m < n}E\left(\Gamma_{m, n}^N\right)^{1/N}\nonumber\\
\leq \, & E\left(\Omega_{1, 1}^N\right)\prod_{m \neq n}E\left(\Omega_{m, n}^{2N}\right)^{1/(2N)} \leq E\left(\Omega_{1, 1}^N\right)\left\{\sup_{y \in \mathbb{R}^3}E\left[\Upsilon_{1, 2}(y)^{2N}\right]\right\}^{(N - 1)/2},
\end{align}
as claimed.
\end{section}
\end{appendix}

\end{document}